\newtheorem{thm}{Theorem}[section]
\newtheorem{lemma}[thm]{Lemma}
\theoremstyle{definition}
\newtheorem{defin}[thm]{Definition}
\theoremstyle{remark}
\newtheorem{remark}[thm]{Remark}
\theoremstyle{remark}
\DeclareMathOperator*{\rank}{rank}
\def\R {{\Bbb R}}
\newcommand{\hsk}{\hskip 0.3cm}
\newcommand{\nl}{\newline}
\newcommand{\ben}{\begin{enumerate}}
\newcommand{\een}{\end{enumerate}}
\newcommand{\bit}{\begin{itemize}}
\newcommand{\eit}{\end{itemize}}
\def\eps{\varepsilon}
\def\dispsum{\displaystyle \sum}
\def\meansum{\frac{1}{N}\dispsum_{n=1}^{N}}
\def\QED{\nobreak\quad\ifmmode\roman{Q.E.D.}\else{\rm Q.E.D.}\fi}
\newcommand{\abs}[1]{\left\lvert#1\right\rvert}
\def\norm#1{\left\Vert#1\right\Vert}
\def\BracketRef#1{(\ref{#1})}
\def\mbf#1{\bm{#1}}
\begin{document}

\title{Finite sample performance of linear least squares estimators under sub-Gaussian martingale difference noise}

\ninept


\twoauthors
  {Michael Krikheli \sthanks{This work is part of the first author's Ph.D. thesis. This work is partially supported by ISF grant 903/2013.}}
	{Faculty of Engineering, \\
    Bar-Ilan University 52900, \\
    Ramat-Gan Israel, \\
    michael.krih@gmail.com}
  {Amir Leshem}
	{Faculty of Engineering, \\
    Bar-Ilan University 52900, \\
    Ramat-Gan Israel}
 
\maketitle
\begin{abstract}
Linear Least Squares is a very well known technique for parameter estimation, which is used even when sub-optimal, because of its very low computational requirements and the fact that exact knowledge of the noise statistics is not required. Surprisingly, bounding the probability of large errors with finitely many samples has been left open, especially when dealing with correlated noise with unknown covariance. In this paper we analyze the finite sample performance of the linear least squares estimator under sub-Gaussian martingale difference noise. In order to analyze this important question we used concentration of measure bounds. When applying these bounds we obtained tight bounds on the tail of the estimator's distribution. We show the fast exponential convergence of the number of samples required to ensure a given accuracy with high probability. We provide probability tail bounds on the estimation error's norm. Our analysis method is simple and uses simple $L_{\infty}$ type bounds on the estimation error. The tightness of the bounds is tested through simulation. The proposed bounds make it possible to predict the number of samples required for least squares estimation even when least squares is sub-optimal and used for computational simplicity. The finite sample analysis of least squares models with this general noise model is novel.
\end{abstract}
\begin{keywords}
Estimation; linear least squares; non-Gaussian; concentration bounds; finite sample; large deviations; confidence bounds; martingale difference sequence
\end{keywords}
\section{Introduction}
\subsection{Related Work}
Linear least squares estimation has numerous applications in many fields. For instance, it was used in soft-decision image interpolation applications in ~\cite{zhang2008image} and ~\cite{hung2012robust}. Another field that uses linear least squares is source localization using signal strength, as in ~\cite{so2011linear}. In that paper, weighted linear least squares was used to find the distance of the received signals given the strength of the signals received in the sensors and the sensors' locations. Weighted least squares estimators were also used in the field of diffusion MRI parameters estimation ~\cite{veraart2013weighted}. It was shown that the weighted linear least squares approach has significant advantages because of its simplicity and good results. A standard analysis of estimation problems calculates the Cramer-Rao bound (CRB) and uses the asymptotic normality of the estimator. This type of analysis is asymptotic by nature. For some applications, see for instance ~\cite{leshem1999direction} where direction of arrival problems were analyzed in terms of the CRB. In ~\cite{stoica1989music} the ML estimator and MUSIC algorithms were studied and the CRB was calculated. However, as is well known, the Central Limit Theorem, and the Gaussian approximation are not valid in the case of rare large errors. In many applications the performance is not impacted by small errors, but large errors can lead to catastrophic results. One such example is in wireless communication channel estimation, where the accuracy of the channel estimation should suffice for the given modulation. However rare events where the estimation is significantly far away can lead to total failure. Furthermore, in such applications, training is short and we cannot rely on asymptotic large deviation results. Hence we need tight upper bounds on the $L_{\infty}$ norm of the error. \nl
The noise model differs accross applications of least squares and other optimization methods. Rather than the Gaussian model a Gaussian mixture is used in many applications. For instance, in ~\cite{djuric2002sequential} a Gaussian mixture model of a time-varying autoregressive process was assumed and analyzed. The Gaussian mixture model was used to model noise in underwater communication systems in ~\cite{banerjee2014performance}. Wiener filters in Guassian mixture signal estimation were analyzed in ~\cite{tan2014wiener}. In ~\cite{bhatia2007non} a likelihood based algorithm for Gaussian mixture noise was devised and analyzed in the terms of the CRLB. In ~\cite{wang1999robust} a robust detection technique using Maximum-Likelihood estimation was proposed for an impulsive noise modeled as a Gaussian mixture. In this work we consider sub-Gaussian noise, which is a general non-Gaussian noise framework. The Gaussian mixture model, for instance, is sub-Gaussian and our results are valid for this model. In the case of Gaussian noise, least squares coincides with the maximum likelihood estimator. Still, in  many cases of interest least squares estimation is used in non-Gaussian noise as well for computational simplicity. Specifically the sub-Gaussian noise model is of special interest in many applications.\nl
In many cases the noise model used is not i.i.d but the noise is correlated. An important case is that of martingale difference noise. This noise model is quite general and is used in various fields. For example the first order ARCH models introduced in ~\cite{engle1982autoregressive} are popular in economic theory. Moreover, ~\cite{lai1982least} analyzed similar least squares models with applications in control theory. The asymptotic properties of these models have been analyzed in various papers, for example ~\cite{lai1983asymptotic, nelson1980note, christopeit1980strong}. The results show the strong consistency of the least squares estimator under martingale difference noise and for autoregressive models. The least squares efficiency in an autoregressive noise model was studied in ~\cite{kramer1980finite}. However, finite sample results were not given.\nl
The least squares problem is well studied. The strong consistency of the linear least squares was proved in ~\cite{lai1985strong}. Asymptotic bounds for fixed size confidence bounds were stated for example in ~\cite{gleser1965asymptotic}.
In the past few years, the finite sample behavior of least squares problems has been studied in ~\cite{oliveira2013lower,hsu2014random,audibert2010robust,audibert2011robust}. Some of these results also analyze regularized least squares models. These results only studied the i.i.d noise case. In this work we extend these results to the sub-Gaussian MDS noise case which is much more general. Beyond the theoretical results we also provide simulated examples of the bounds for the problem of channel estimation with a random mixing matrix.\nl

\subsection{Contribution}
In this paper we provide a finite sample analysis of linear least squares problems under sub-Gaussian martingale difference sequence (MDS) noise.
We provide $L_{\infty}$ error bounds that can be used to compute the confidence interval in a non-parametric way (i.e., without knowing the exact distribution) of the estimation error. The main theorem of this paper allows us to compute the performance of linear least squares under very general conditions. Since the linear least squares solution is computationally simple it is used in practice even when it is sub-optimal. The analysis of this paper allows the designer to understand the loss due to the computational complexity reduction without the need for massive simulations. We extend the results of ~\cite{krikheli2016finite} in two significant ways. The first is allowing the mixing matrix to be a general bounded elements matrix. More importantly, we extend the analysis to the case of sub-Gaussian MDS noise. The sub Gaussian martingale noise covers many examples of correlated noise, and specifically the case of an interfering zero mean signal which passes through a finite impulse response channel. Hence we are able to predict large error behavior. This provides finite sample analysis under a very general noise framework. While the bounds are not tight, they are still useful and pave the way to further analyses which may tighten these bounds even further. The fact that we only need knowledge of a sub-Gaussianity parameter of the noise allows us to use these bounds when the noise distribution is unknown. 
\section{problem formulation}\label{sec:problem_formulation}
Consider a linear model with additive noise
\begin{equation}\label{eq:x_definition}
\mbf{x} = \mbf{A}\mbf{\theta}_0 + \mbf{v}
\end{equation}
where $\mbf{x} \in \R^{N\times 1}$ is our output, $\mbf{A} \in \R^{N\times p}$ is a known matrix with bounded random elements, $\mbf{\theta}_0 \in \R^p$ is the estimated parameter and $\mbf{v} \in \R^{N\times 1}$ is a noise vector with independent and sub-Gaussian elements\footnote{For simplicity we only consider the real case. The complex case is similar with minor modifications.}. $N$ indicates the number of samples used in the model.\nl
Many real world noise models are sub-Gaussian including Gaussians, finite Gaussian mixtures, all the bounded variables, and any combination of the above. Many real world applications are subject to such noise.\nl
The least squares estimator with $N$ samples is given by
\begin{equation}\label{def:theta_N}
\hat{\mbf{\theta}}^N_{0} = \left(\mbf{A}^T\mbf{A}\right)^{-1}\mbf{A}^T\mbf{x} = \left(\meansum \mbf{a}_n\mbf{a}_n^T\right)^{-1} \meansum \mbf{a}_n^Tx_n
\end{equation}
where $\mbf{a}_n^T, \hsk n=1 \dots N$ are the rows of $\mbf{A}$ and $x_n, \hsk n=\hsk 1 \dots N$ are the data samples.
When $E\left(\mbf{v}\right) = \mbf{0}$, $E\left(\hat{\mbf{\theta}}_0^N\right) = \mbf{\theta}_0$ and the estimator is unbiased.\nl
We want to study the tail distribution of $\norm{\hat{\mbf{\theta}}^N_{0} - \mbf{\theta}_0}_{\infty}$ or more specifically to obtain bounds of the form
\begin{equation}\label{eq:wanted_inequality}
P\left(\norm{\hat{\mbf{\theta}}^N_{0} - \mbf{\theta}_0}_\infty > r\right) < \eps
\end{equation}
as a function of $N$. Furthermore, given $r$, $\eps$ we want to calculate the number of samples needed $N\left(r,\eps\right)$ to achieve the above inequality.
We analyze the case where $\mbf{A}$ is random with bounded elements. \nl
Throughout this paper we use the following mathematical notations:
\begin{defin} \ \\\label{definitions}
\begin{enumerate}
\item{Let $\mbf{B} \in \R^{p\times p}$ be a square matrix; we define the operators $\lambda_{max}\left(\mbf{B}\right)$ and $\lambda_{min}\left(\mbf{B}\right)$ to give the maximal and minimal eigenvalues of $\mbf{B}$ respectively.}
\item{Let $\mbf{C}$ be a matrix. The spectral norm for matrices is given by $\norm{\mbf{C}} \doteq \sqrt{\lambda_{max}\left(\mbf{C^T}\mbf{C}\right)}$.}
\item{A random variable $v$ with $E\left(v\right) = 0$ is called sub-Gaussian if its moment generating function exists and $E\left(\exp\left(sv\right)\right) \leq \exp\left(\frac{s^2R^2}{2}\right)$ ~\cite{vershynin2010introduction}. The minimal $R$ that satisfies this inequality is called the sub-Gaussian parameter of the random variable $v$ and we say that $v$ is sub-Gaussian with parameter $R$.}\label{sub_Gaussian_def}
\end{enumerate}
\end{defin}
\begin{remark}
Assume that $x \sim N\left(0,\sigma^2\right)$, then the moment generating function of $x$ is $M(s) = E\left(\exp\left(sx\right)\right) = \exp\left(\frac{s^2\sigma^2}{2}\right)$. Therefore, by definition \ref{sub_Gaussian_def} $x$ is also sub-Gaussian with parameter $\sigma$.
\end{remark}

\section {Main Result}\label{sec:main_result}
In this section we formulate the main result of this paper, discuss it and provide a proof outline.
\nl
We make the following assumptions regarding the problem. These assumptions are mild and cover a very large set of linear least squares problems. 
\begin{enumerate}[label=\bfseries A\arabic*:]
\item $E\left(v_n\right) = 0 \hsk \forall 1 \leq n \leq N$.\label{assum:1}
\item $P\left(\rank\left({\mbf{A}^T\mbf{A}}\right) = p\right) = 1$
\item $P\left(\abs{a_{ni}} \leq \alpha\right) = 1 \hsk \forall n = 1 \dots N \hsk \forall i = 1 \dots p$
\item For all $N > 0$ there exists $\mbf{M} \in \R^{p\times p}$ such that $\mbf{M} = \frac{1}{N} E\left(\mbf{A}^T\mbf{A}\right)$. We denote $\sigma_{max} \doteq \lambda_{max}\left(\mbf{M}\right)$ and $\sigma_{min} = \lambda_{min}\left(\mbf{M}\right)$.
\item $E\left(v_n | F_{n-1}\right) = 0$. Where $F_{n-1}$ is a flirtation, $v_n$ are independent of $\mbf{A}.$
\item The martingale difference sequence is $\delta$ sub-Gaussian; i.e. $E\left(s v_n | F_{n-1}\right) \leq e^{\frac{s^2 \delta^2}{2}}$.
\end{enumerate}
Assumptions A1-A2 are standard in least squares theory. Assumption A1 assumes that our design is correct. Assumption A2 ensures that the least squares estimator exists. Assumptions A3 and A4 are mild and achievable by normalizing each row of the mixing matrix with the proper scaling of the sub-Gaussian parameter. Assumption A5 means that the noise sequence is a martingale difference sequence and assumption A6 assumes that the noise sequence is sub-Gaussian. Note that the set of assumptions is valid for any type of martingale difference zero mean sub-Gaussian noise model, which is a very wide family of distributions. \nl

The main theorem provides bounds on the convergence rate of the finite sample least squares estimator to the real parameter. The theorem provides the number of samples needed so that the distance between the estimator and the real parameter will be at most $r$ with probability $1-\eps$.

\begin{thm} \emph{\textbf{(Main Theorem)}} \label{thm:main_theorem} \ \\
Let $\mbf{x}$ be defined as in ~\BracketRef{eq:x_definition} and assume assumptions A1-A6. Let $\eps > 0$ and $r > 0$ be given and $\hat{\mbf{\theta}}_0^N$ and $\mbf{\theta}_0$ be defined as previously, then $\forall N > N\left(r,\eps\right)$
\begin{equation}
P\left(\norm{\hat{\mbf{\theta}}_0^N - \mbf{\theta}_0}_{\infty} > r\right) < \eps
\end{equation}
where
\begin{equation}
N\left(r,\eps\right) = \max \left\{N_1\left(r,\eps\right), N_{rand}\left(\eps\right)\right\},
\end{equation}

\begin{equation}
N_1\left(r,\eps\right) = \frac{8\alpha^2\delta^2}{r^2\sigma_{min}^2}\log \frac{2p}{\eps}
\end{equation}
and
\begin{equation}
N_{rand}\left(\eps\right) = \frac{4}{3}\frac{\left(6\sigma_{max} + \sigma_{min}\right)\left(p\alpha^2 + \sigma_{max}\right)}{\sigma_{min}^2}\log \frac{2p}{\eps}.
\end{equation}

\end{thm}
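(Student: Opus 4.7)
The plan is to decompose the estimation error into a matrix factor and an MDS-driven noise factor, bound each on a high-probability event, and combine them via a union bound so the overall failure probability is at most $\eps$. Starting from the normal-equation identity
\begin{equation*}
\hat{\mbf{\theta}}^N_0 - \mbf{\theta}_0 \;=\; \left(\tfrac{1}{N}\mbf{A}^T\mbf{A}\right)^{-1}\!\left(\tfrac{1}{N}\mbf{A}^T\mbf{v}\right),
\end{equation*}
the two natural good events to build are that the empirical Gram matrix is well conditioned from below and that the MDS noise average is small in the appropriate sense.

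For the matrix factor I would first show $\lambda_{min}(\tfrac{1}{N}\mbf{A}^T\mbf{A}) \geq \sigma_{min}/2$ with probability at least $1-\eps/2$, provided $N \geq N_{rand}(\eps)$. Writing $\tfrac{1}{N}\mbf{A}^T\mbf{A} = \tfrac{1}{N}\sum_n \mbf{a}_n\mbf{a}_n^T$ as a sum of p.s.d.\ rank-one random matrices, assumption A3 gives the per-term spectral bound $\|\mbf{a}_n\mbf{a}_n^T\|\leq p\alpha^2$ and hence $\|\mbf{a}_n\mbf{a}_n^T-\mbf{M}\|\leq p\alpha^2+\sigma_{max}$, while $E[(\mbf{a}_n\mbf{a}_n^T)^2]\preceq p\alpha^2\,E[\mbf{a}_n\mbf{a}_n^T]$ together with A4 yields a matrix-variance proxy of size $p\alpha^2\sigma_{max}/N$ for the average. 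Plugging these into a matrix Bernstein inequality with deviation $t=\sigma_{min}/2$ produces precisely $N_{rand}(\eps)$: the factor $(p\alpha^2+\sigma_{max})$ is the per-sample Bernstein constant $R$, the factor $(6\sigma_{max}+\sigma_{min})$ is what falls out of $V+Rt/3$ when $t=\sigma_{min}/2$ is substituted, and the $\log(2p/\eps)$ comes from the dimensional prefactor $2p$ in the matrix Bernstein bound.

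On this matrix good event I would then handle the noise factor coordinatewise via conditional sub-Gaussian MDS concentration. For each $i$, write
\begin{equation*}
(\hat{\mbf{\theta}}^N_0 - \mbf{\theta}_0)_i \;=\; \sum_{n=1}^N w_{n,i}\,v_n, \qquad w_{n,i}=[(\mbf{A}^T\mbf{A})^{-1}\mbf{A}^T]_{i,n},
\end{equation*}
where the weights $w_{n,i}$ are $\mbf{A}$-measurable and, by A5, independent of $\mbf{v}$. Conditioning on $\mbf{A}$, the sequence $(w_{n,i}v_n)$ is a martingale difference sequence with conditional sub-Gaussian parameter $|w_{n,i}|\delta$ by A6, so the usual Chernoff step (Azuma--Hoeffding for sub-Gaussian MDS) gives
\begin{equation*}
P\!\left(\bigl|(\hat{\mbf{\theta}}^N_0-\mbf{\theta}_0)_i\bigr| > r \,\Big|\, \mbf{A}\right) \;\leq\; 2\exp\!\left(-\frac{r^2}{2\delta^2\sum_n w_{n,i}^2}\right).
\end{equation*}
On the matrix event, the sum $\sum_n w_{n,i}^2$ is controlled by combining the spectral bound $\|(\mbf{A}^T\mbf{A})^{-1}\|\leq 2/(N\sigma_{min})$ with the entrywise bound $|a_{n,i}|\leq\alpha$ from A3, producing an estimate of order $\alpha^2/(N\sigma_{min}^2)$. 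This turns the Azuma exponent into $-r^2 N\sigma_{min}^2/(c\alpha^2\delta^2)$, and a union bound over $i=1,\dots,p$ drives the failure probability below $\eps/2$ as soon as $N\geq N_1(r,\eps)$.

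A final union bound over the two bad events completes the proof: on their intersection $\|\hat{\mbf{\theta}}^N_0-\mbf{\theta}_0\|_\infty\leq r$, and the complement has probability at most $\eps/2+\eps/2=\eps$ whenever $N\geq\max\{N_1(r,\eps),N_{rand}(\eps)\}$. The main obstacle will be the matrix-concentration step: since the rows $\mbf{a}_n$ are only assumed bounded (and not identically distributed nor sub-Gaussian), the simpler matrix Chernoff form for i.i.d.\ samples is unavailable, and getting the exact constants in $N_{rand}$ requires careful bookkeeping of both the Bernstein variance $V$ and the spike bound $R$. The noise step, by contrast, is a fairly routine application of coordinatewise Azuma--Hoeffding, made possible by the independence of $\mbf{v}$ and $\mbf{A}$ in A5 and the conditional sub-Gaussianity in A6.
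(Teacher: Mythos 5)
Your proposal is correct and follows essentially the same route as the paper: the same decomposition $\hat{\mbf{\theta}}^N_0-\mbf{\theta}_0=(\tfrac{1}{N}\mbf{A}^T\mbf{A})^{-1}(\tfrac{1}{N}\mbf{A}^T\mbf{v})$, a matrix-concentration event forcing $\lambda_{min}(\tfrac{1}{N}\mbf{A}^T\mbf{A})\geq\sigma_{min}/2$ for $N\geq N_{rand}$ (the paper delegates this to Lemma~\ref{lemma:omega_bound}), a conditional sub-Gaussian MDS Chernoff/Azuma bound per coordinate giving $N_1$, and the same $\eps/2+\eps/2$ union bound. The only cosmetic difference is that the paper first pulls out $\lambda_{max}((\tfrac{1}{N}\mbf{A}^T\mbf{A})^{-1})$ as a scalar and concentrates $\tfrac{1}{N}\mbf{c}_i^T\mbf{v}$ with weights bounded by $\alpha$ via an iterated MGF bound, whereas you concentrate the full weighted sum $\sum_n w_{n,i}v_n$ directly; both yield the stated constants.
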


\subsection{Discussion} \
The importance of this result is that it gives an easily calculated bound on the number of samples needed for linear least squares problems. It shows a sharp convergence in probability as a function of $N$, and shows that the number of samples is $O\left(\frac{1}{r^2}\log \frac{1}{\eps}\right)$. Moreover, the result handles the case where the noise is a martingale difference sequence. This is the first finite sample analysis result for least squares under this noise assumption.\nl
The results in this work are given with an $L^{\infty}$ norm. The $L^{\infty}$ results can give confidence bounds for every coordinate of the parameter vector $\mbf{\theta}_0$. Results for other norms can be achieved as well using the relationships between norms.\nl

We start by stating two auxiliary lemmas
\begin{lemma}\label{lemma:general_least_squares_calculation}
Let $\mbf{x}$ be defined as in \BracketRef{eq:x_definition}. Assume A1-A6 hold. Furthermore, let $\hat{\mbf{\theta}}^N_{0}$ be defined in \BracketRef{def:theta_N} and let $r > 0$ be given, then
\begin{equation}
\begin{array}{lcl}
&P\left(\abs{\left(\hat{\mbf{\theta}}^N_0 - \mbf{\theta}_0\right)_i} > r\right)&\\
\leq& P\left(\abs{\meansum a_{ni}v_n} > \frac{r}{\lambda_{max}\left(\left(\frac{1}{N}\mbf{A}^T\mbf{A}\right)^{-1}\right)}\right).&
\end{array}
\end{equation}
\end{lemma}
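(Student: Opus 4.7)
The plan is to reduce the coordinate bound on the estimation error to a tail bound on a single linear combination of noise entries, via the closed-form least-squares solution together with the spectral properties of the inverse Gram matrix.

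First I would substitute the model (\ref{eq:x_definition}) into the definition (\ref{def:theta_N}). Under A2, the inverse $(\mbf{A}^T\mbf{A})^{-1}$ exists almost surely, so the $\mbf{\theta}_0$ term cancels and one obtains
\begin{equation*}
\hat{\mbf{\theta}}_0^N - \mbf{\theta}_0 = \left(\tfrac{1}{N}\mbf{A}^T\mbf{A}\right)^{-1}\cdot\tfrac{1}{N}\mbf{A}^T\mbf{v}.
\end{equation*}
Denoting $\mbf{B} = \left(\tfrac{1}{N}\mbf{A}^T\mbf{A}\right)^{-1}$, which is symmetric and positive definite by A2, and setting $w_j = \tfrac{1}{N}\sum_{n=1}^N a_{nj}v_n$ for the $j$-th coordinate of $\tfrac{1}{N}\mbf{A}^T\mbf{v}$, the error vector equals $\mbf{B}\mbf{w}$, so that $(\hat{\mbf{\theta}}_0^N - \mbf{\theta}_0)_i = (\mbf{B}\mbf{w})_i$.

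Next I would bound the $i$-th coordinate of $\mbf{B}\mbf{w}$ via the spectral norm of $\mbf{B}$. Because $\mbf{B}$ is symmetric positive definite, $\|\mbf{B}\|_2 = \lambda_{max}(\mbf{B})$, and the target pathwise inequality is
\begin{equation*}
\bigl|(\hat{\mbf{\theta}}_0^N - \mbf{\theta}_0)_i\bigr| \leq \lambda_{max}(\mbf{B})\cdot|w_i|.
\end{equation*}
This deterministic bound yields the event inclusion $\bigl\{|(\hat{\mbf{\theta}}_0^N-\mbf{\theta}_0)_i|>r\bigr\}\subseteq\bigl\{|w_i|>r/\lambda_{max}(\mbf{B})\bigr\}$, and monotonicity of $P$ on inclusions produces the stated probability inequality.

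The main technical point is justifying the pathwise coordinate bound in the second step: since $\mbf{B}$ is not diagonal in general, $(\mbf{B}\mbf{w})_i$ mixes all entries of $\mbf{w}$, so reducing its absolute value to the single entry $|w_i|$ is the delicate place where the symmetry and positive-definiteness of $\mbf{B}$ enter, ensuring that the relevant operator norm coincides with the largest eigenvalue. Once this inequality is established, the algebra of the closed form and the passage from a deterministic inclusion to a probability inequality are routine.
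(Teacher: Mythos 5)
Your first step is fine: substituting \BracketRef{eq:x_definition} into \BracketRef{def:theta_N} and using A2 does give $\hat{\mbf{\theta}}_0^N - \mbf{\theta}_0 = \mbf{B}\mbf{w}$ with $\mbf{B} = \left(\frac{1}{N}\mbf{A}^T\mbf{A}\right)^{-1}$ and $w_j = \frac{1}{N}\sum_{n=1}^N a_{nj}v_n$. The gap is exactly at the point you yourself flag as delicate: the pathwise inequality $\abs{\left(\mbf{B}\mbf{w}\right)_i} \leq \lambda_{max}\left(\mbf{B}\right)\abs{w_i}$ is false in general, and symmetry plus positive definiteness of $\mbf{B}$ do not rescue it. Those properties give you $\norm{\mbf{B}} = \lambda_{max}\left(\mbf{B}\right)$ and hence the \emph{vector} bound $\abs{\left(\mbf{B}\mbf{w}\right)_i} \leq \norm{\mbf{B}\mbf{w}}_2 \leq \lambda_{max}\left(\mbf{B}\right)\norm{\mbf{w}}_2$, but they say nothing about controlling the $i$-th output coordinate by the $i$-th input coordinate alone. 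Concretely, take
\begin{equation*}
\mbf{B} = \begin{pmatrix} 1 & 1/2 \\ 1/2 & 1 \end{pmatrix}, \qquad \mbf{w} = \begin{pmatrix} 0 \\ 1 \end{pmatrix},
\end{equation*}
which is symmetric positive definite with $\lambda_{max}\left(\mbf{B}\right) = 3/2$; then $\left(\mbf{B}\mbf{w}\right)_1 = 1/2$ while $\lambda_{max}\left(\mbf{B}\right)\abs{w_1} = 0$. So the event inclusion $\left\{\abs{\left(\mbf{B}\mbf{w}\right)_i} > r\right\} \subseteq \left\{\abs{w_i} > r/\lambda_{max}\left(\mbf{B}\right)\right\}$ does not hold, and the monotonicity-of-$P$ step has nothing to stand on.

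I cannot compare your route to the paper's, because the paper dismisses this lemma as a ``straightforward computation'' and omits the proof; but the computation is not straightforward in the way you propose. If you want a correct deterministic inclusion of this general shape, the natural one is $\left\{\abs{\left(\mbf{B}\mbf{w}\right)_i} > r\right\} \subseteq \left\{\norm{\mbf{w}}_2 > r/\lambda_{max}\left(\mbf{B}\right)\right\} \subseteq \bigcup_{j=1}^p\left\{\abs{w_j} > r/\left(\sqrt{p}\,\lambda_{max}\left(\mbf{B}\right)\right)\right\}$, which after a union bound costs you a factor $\sqrt{p}$ inside the threshold and a factor $p$ on the probability relative to the stated lemma. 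Either you find a genuinely different argument that exploits more structure than the spectral norm (none is apparent from $\mbf{B}$ being symmetric positive definite alone), or you should accept the weaker, $\sqrt{p}$-degraded version; as written, your proof does not establish the lemma.
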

\begin{proof}
This lemma can be proven by a straightforward computation and the proof is left to the reader.
\end{proof}

\begin{lemma}\label{lemma:omega_bound}
Under assumptions A2-A4 and for all $N \geq N_{rand}\left(\eps'\right)$
\begin{equation}
P\left(\lambda_{max}\left(\frac{1}{N}\left(\mbf{A}^T\mbf{A}\right)\right)^{-1} \geq \frac{2}{\sigma_{min}}\right) \leq \eps',
\end{equation}
where
\begin{equation}
N_{rand}\left(\eps'\right) = \frac{4}{3}\frac{\left(6\sigma_{max} + \sigma_{min}\right)\left(p\alpha^2 + \sigma_{max}\right)}{\sigma_{min}^2}\log\left(\frac{p}{\eps'}\right).
\end{equation}
\end{lemma}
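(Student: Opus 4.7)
The plan is to translate the spectral-norm statement about $(\tfrac{1}{N}\mbf{A}^T\mbf{A})^{-1}$ into a concentration statement for the empirical Gram matrix around its mean $\mbf{M}$, and then apply the one-sided matrix Bernstein inequality. Since $\lambda_{max}((\tfrac{1}{N}\mbf{A}^T\mbf{A})^{-1}) = 1/\lambda_{min}(\tfrac{1}{N}\mbf{A}^T\mbf{A})$, the event of interest is $\{\lambda_{min}(\tfrac{1}{N}\mbf{A}^T\mbf{A}) \leq \sigma_{min}/2\}$. Writing $\mbf{X}_n = \mbf{a}_n\mbf{a}_n^T - E(\mbf{a}_n\mbf{a}_n^T)$, Weyl's inequality gives $\lambda_{min}(\tfrac{1}{N}\mbf{A}^T\mbf{A}) \geq \sigma_{min} + \tfrac{1}{N}\lambda_{min}(\sum_n \mbf{X}_n)$, so this event is contained in $\{\lambda_{max}(-\sum_n \mbf{X}_n) \geq N\sigma_{min}/2\}$.

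I would then apply the one-sided matrix Bernstein inequality to the independent, centered, symmetric $p\times p$ matrices $-\mbf{X}_n$. Assumption A3 yields $\|\mbf{a}_n\mbf{a}_n^T\| = \|\mbf{a}_n\|_2^2 \leq p\alpha^2$, and A4 yields $\|E(\mbf{a}_n\mbf{a}_n^T)\| \leq \sigma_{max}$, so a uniform bound $L := \|\mbf{X}_n\| \leq p\alpha^2 + \sigma_{max}$ follows by the triangle inequality. For the matrix-variance parameter, the key step is the PSD domination $E(\mbf{a}_n\mbf{a}_n^T\mbf{a}_n\mbf{a}_n^T) = E(\|\mbf{a}_n\|_2^2\, \mbf{a}_n\mbf{a}_n^T) \preceq p\alpha^2 \mbf{M}$, which together with $\|\mbf{M}\| \leq \sigma_{max}$ produces $\nu := \|\sum_n E(\mbf{X}_n^2)\| \leq N\sigma_{max}(p\alpha^2 + \sigma_{max})$.

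Plugging these two parameters together with $t = N\sigma_{min}/2$ into Bernstein's tail $p\exp\!\bigl(-t^2/(2\nu + 2Lt/3)\bigr)$, the denominator collapses to $N(p\alpha^2+\sigma_{max})(6\sigma_{max}+\sigma_{min})/3$, and the resulting bound $p\exp\!\bigl(-3N\sigma_{min}^2/[4(p\alpha^2+\sigma_{max})(6\sigma_{max}+\sigma_{min})]\bigr)$ is at most $\eps'$ precisely when $N \geq N_{rand}(\eps')$, which is the stated threshold.

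I expect the main technical hurdle to be the matrix-variance bookkeeping: one must exploit the PSD domination $E((\mbf{a}_n\mbf{a}_n^T)^2) \preceq p\alpha^2 \mbf{M}$ rather than use the naive $\|\mbf{X}_n\|^2$ bound, otherwise the dependence on $p$ would migrate unfavorably into $\nu$ and degrade the scaling. A secondary but essential point is that one invokes the one-sided version of matrix Bernstein so that the dimensional prefactor is $p$ rather than $2p$, which is what yields the exact $\log(p/\eps')$ appearing in $N_{rand}(\eps')$. Independence of the rows of $\mbf{A}$, which matrix Bernstein requires, is read as implicit in the problem setup (A4 being interpreted row-wise).
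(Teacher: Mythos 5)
Your proof is correct, and it is essentially the proof the paper chose not to print: the paper's own ``proof'' of this lemma is a one-line deferral to Theorem~4.1 of its reference [krikheli2016finite], and the specific constants in $N_{rand}(\eps')$ --- the factor $\tfrac{4}{3}$, the product $(6\sigma_{max}+\sigma_{min})(p\alpha^2+\sigma_{max})$, and the prefactor $p$ inside the logarithm --- are exactly what your choices $t=N\sigma_{min}/2$, $L=p\alpha^2+\sigma_{max}$, $\nu=N\sigma_{max}(p\alpha^2+\sigma_{max})$ produce in the one-sided matrix Bernstein tail, so you have reconstructed the intended argument down to the constant. Two points worth making explicit, since the paper's assumptions do not state them: (i) matrix Bernstein needs the rows $\mbf{a}_n$ to be independent, which A2--A4 never assert (you correctly flag this as implicit); and (ii) your bound $\norm{E(\mbf{a}_n\mbf{a}_n^T)}\leq\sigma_{max}$ for each individual $n$ does not follow from A4 for a single $N$, but it does follow from A4 holding \emph{for all} $N>0$, since telescoping $\sum_{n=1}^N E(\mbf{a}_n\mbf{a}_n^T)=N\mbf{M}$ against $\sum_{n=1}^{N-1}E(\mbf{a}_n\mbf{a}_n^T)=(N-1)\mbf{M}$ forces $E(\mbf{a}_n\mbf{a}_n^T)=\mbf{M}$ for every $n$; you may want to record that one-line justification rather than leave the per-row bound unsupported.
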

\begin{proof}
The proof is a generalization of the proof of theorem 4.1 in ~\cite{krikheli2016finite}. It is omitted due to space limitations.
\end{proof}

\subsection{Prof Outline} \
\begin{proof}
We wish to study the term
\begin{equation}
P\left(\norm{\hat{\mbf{\theta}}_0^N - \mbf{\theta}_0}_{\infty} > r\right).
\end{equation}
In order to do so we start by bounding each of the terms in the vector separately and use a union bound approach to achieve the $L_{\infty}$ bound. We start by analyzing the term
\begin{equation}
P\left(\abs{\left(\hat{\mbf{\theta}}_0^N - \mbf{\theta}_0\right)_i} > r\right).
\end{equation}
Using lemma ~\ref{lemma:general_least_squares_calculation} we achieve
\begin{equation}
\begin{array}{lcl}
& P\left(\abs{\left(\hat{\mbf{\theta}}_0^N - \mbf{\theta}_0\right)_i} > r\right) &\\
\leq& P\left(\abs{\meansum a_{ni}v_n} > \frac{r}{\lambda_{max}\left(\frac{1}{N}\mbf{A}^T\mbf{A}\right)^{-1}}\right).&
\end{array}
\end{equation}
We define the set of events 
\begin{equation}\label{def_psi_1_mart}
\Psi_1 \doteq \left\{\mbf{X} : \lambda_{max}\left(\frac{1}{N}\mbf{A}^T\mbf{A}\right)^{-1} \geq \frac{2}{\sigma_{min}}\right\}.
\end{equation}
We want to study the number of samples required to achieve that $P\left(\mbf{X} \in \Psi_1\right) \leq \frac{\eps}{2}$.
In order to achieve this, we use lemma \ref{lemma:omega_bound} with parameter $\eps' = \frac{\eps}{2}$ to find that $\forall N > N_{rand}\left(\eps\right)$
\begin{equation}
P\left(\mbf{X} \in \Psi_1\right) = P\left(\lambda_{max}\left(\frac{1}{N}\mbf{A}^T\mbf{A}\right)^{-1} \geq \frac{2}{\sigma_{min}}\right) \leq \frac{\eps}{2}.
\end{equation}
We denote 
\begin{defin}
$\mbf{c_i}$ the $i$-th column of $\mbf{A}$.
\end{defin}
We now assume that $\mbf{X} \notin \Psi_1$. 
Under this assumption the following inequality holds
\begin{equation}
P\left(\abs{\left(\hat{\mbf{\theta}}^N_0 - \mbf{\theta}_0\right)_0} > r\right)
\leq P\left(\frac{1}{N} \mbf{c}_i^T\mbf{v} > \frac{r\sigma_{min}}{2}\right).
\end{equation}
We denote by
\begin{equation}\label{def_psi_2_mart}
\Psi_2\left(i\right) \doteq \left\{\mbf{X} : \frac{1}{N} \mbf{c}_i^T\mbf{v} > \frac{r\sigma_{min}}{2}\right\}.
\end{equation}
We now obtain a bound on the number of samples required to ensure that
\begin{equation}
P\left(\mbf{X} \in \Psi_2\left(i\right)\right) = P\left(\frac{1}{N} \mbf{c}_i^T\mbf{v} > \frac{r\sigma_{min}}{2}\right) \leq \frac{\eps}{2p}.
\end{equation}

We now outline a proof for a concentration result for a sub-Gaussian martingale difference sequence using similar methods to ~\cite{thoppe2015concentration}. We begin by bounding the moment generating function. We start by bounding $E\left(\exp\left(s\mbf{c}_i^T\mbf{v}\right)\right)$ and then we use Markov's inequality. Applying assumption A3 and A6 we obtain
\begin{equation}
E\left(\exp\left(s\mbf{c}_i^T\mbf{v}\right)\right)
\leq E\left(\exp\left(s\alpha \dispsum_{n=1}^{N-1}v_n\right)\right) e^{\frac{s^2\alpha^2\delta^2}{2}}.
\end{equation}
Iterating this procedure yields
\begin{equation}\label{eq:sub_Gaussian_expectation_bound}
E\left(\exp\left(s\mbf{c}_i^T\mbf{v}\right)\right) \leq \exp\left(\frac{Ns^2\alpha^2\delta^2}{2}\right).
\end{equation}
Looking now at the original equation we use the Laplace method and Markov's inequality alongside equation ~\BracketRef{eq:sub_Gaussian_expectation_bound} to achieve
\begin{equation}
P\left(\mbf{X} \in \Psi_2\left(i\right)\right) \leq
\exp\left(\frac{N}{2}\left(s^2\alpha^2\delta^2 - sr\sigma_{min}\right)\right)
\end{equation}
Optimizing over $s > 0$ we achieve
\begin{equation}
P\left(\mbf{c}_i^T\mbf{v} > \frac{Nr\sigma_{min}}{2}\right) \leq \exp\left(-\frac{Nr^2\sigma_{min}^2}{8\alpha^2\delta^2}\right).
\end{equation}
Choosing $N > \frac{8\alpha^2\delta^2}{r^2\sigma_{min}^2}\log \frac{2p}{\eps}$ ensures that
\begin{equation}
P\left(\mbf{X} \in \Psi_2\left(i\right)\right) = P\left(\frac{1}{N} \mbf{c}_i^T\mbf{v} > \frac{r\sigma_{min}}{2}\right) < \frac{\eps}{2p}.
\end{equation}
We achieved a bound for each coordinate separately. The last step of the proof is to use the union bound on these terms to achieve a bound on the $L_{\infty}$ norm of the vector. We define
\begin{equation}
\Psi_2 \doteq \bigcup_{i=1}^p \Psi_{2}\left(i\right).
\end{equation}
Using the union bound we obtain $\forall N > N\left(r,\eps\right)$
\begin{equation}
P\left(\mbf{X} \in \Psi_2\right) \leq \dispsum_{i=1}^p P\left(\mbf{X} \in \Psi_2\left(i\right)\right) \leq \frac{\eps}{2}.
\end{equation}
Using the union bound again we obtain $P\left(\mbf{X} \in \Psi_1 \cup \Psi_2\right) \leq \eps$.
This completes the proof.
\end{proof}

\section{Simulation Results}

Assume that we have a linear system with unknown parameters with noise that is filtered using a finite impulse response system and a sub-Gaussian signal. We can write
\begin{equation}
x_n = \mbf{a}_n^T \mbf{\theta} + \dispsum_{i=0}^k j\left(i\right)H\left(n-i\right) + w\left(n\right).
\end{equation}
where $j\left(i\right)$ is an i.i.d zero mean bounded signal for example a BPSK signal. We denote $\eta$ as the bound for $j\left(i\right)$, i.e. $P\left(j\left(i\right) \leq \eta\right) = 1$. We also assume that $H\left(n\right)$ is an unknown system. 
We now prove that the noise sequence $v_n = \dispsum_{i=0}^k j\left(i\right)H\left(n-i\right) + w\left(n\right)$ is a zero mean martingale difference, that it is sub-Gaussian and thus admits assumptions A5 and A6. If so, we can use theorem \ref{thm:main_theorem} to calculate the number of samples required to achieve a certain finite sample performance for this interesting model.
\begin{multline}
E\left(v_n | F_{n-1}\right) = E\left(\dispsum_{i = 0}^k j\left(i\right)H\left(n-i\right) + w\left(n\right)  | F_{n-1}\right) \\ 
= E\left(\dispsum_{i=0}^k j\left(i\right) H \left(n-i\right) | F_{n-1}\right) + E\left(w_n | F_{n-1}\right)\\
 = E\left(\dispsum_{i=0}^k j\left(i\right) H \left(n-i\right) | F_{n-1}\right) = 0.
\end{multline}
The second equatlity follows from the independence of the random variables $w\left(n\right)$, $j\left(n\right)$ and $H\left(n\right)$. The next equality follows from the fact that $w\left(n\right)$ is zero mean. The last equality follows from the fact that $E\left(j\left(n\right)\right) = 0$. We now prove that the $v_n$ is sub-Gaussian. We use the assumption that $j\left(n\right) \leq \eta$ and that $H\left(n\right)$ and $w\left(n\right)$ are sub-Gaussian with parameter $R_1$ and $R_2$ respectively. Using these facts with the property that $j\left(k\right)H\left(n\right)$ is sub-Gaussian as they are independent and $j\left(i\right) \leq \eta$ and the fact that linear combinations of sub-Gaussian random variables is sub-Gaussian ~\cite{vershynin2010introduction} we can conclude that $v_n$ is sub-Gaussian and admits assumption A6. We have now proven that this example admits all the assumptions of theorem \ref{thm:main_theorem} and therefore we can use the theorem to bound the number of samples needed to achieve a predefined performance. Fig 1. shows the performance of the bound in this interesting case. We see that while the bound is not tight, the overall performance is similar. This example demonstrates the strengths of the results in this paper. Many signal processing applications such as this example can be analyzed using our results.
\begin{figure}
    \centering
    \includegraphics[scale=0.4]{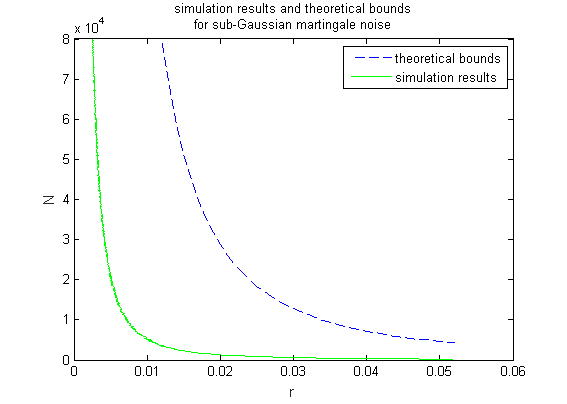}
    \caption{Simulation results and martingale difference theorem bounds for a sub-Gaussian martingale difference sequence noise with $\eps = 0.2, \sigma_{min} = \sigma_{max} = 10$, $\delta = 4$ and $p = 2$. The graph is for $N$ as a function of $r$.}
\end{figure}

\section{Concluding remarks}\label{sec:concluding_remarks}
In this paper we examined the finite sample performance of the $L^{\infty}$ error of the linear least squares estimator. We showed very fast convergence of the number of samples required as a function of the probability of the $L^{\infty}$ error. We showed that the number of samples required to achieve a maximal deviation $r$ with probability $1-\eps$ is $N \sim O\left(\frac{1}{r^2}\log \frac{1}{\eps}\right)$. The main theorem deals with least squares in very general noise models; therefore the bounds may be important in many interesting applications. We used simulations to demonstrate the results. Our simulation results suggest that the bounds given in this paper have similar properties as the simulation results. We showed that the interesting example of a finite impulse response filtered interference with sub-Gaussian noise model can be modeled as a sub-Gaussian martingale difference model in our setup and our theorem can give bounds on the number of samples required to achieve the required performance in this important case. This result has significant implications for the analysis of least squares problems in communications and signal processing. We also believe that the Sub-Gaussian parameter can be replaced with bounds on a few moments of the distribution and can relax the bounds. This is left for further study.

\newpage
\bibliographystyle{IEEEtran}
\bibliography{icassp_2018_least_squares_with_martingale_difference_noise}

\end{document}